\newtheorem{lemma}{Lemma}[section]
\newtheorem{remark}{Remark}[section]
\newtheorem{theorem}{Theorem}[section]
\newtheorem{prop}{Proposition}[section]
\numberwithin{equation}{section}
\numberwithin{figure}{section}
\numberwithin{table}{section}
\newcommand{\E}{\mathbb{E}}
\newcommand{\R}{\mathbb{R}}
\newcommand{\beas}{\begin{eqnarray*}}
\newcommand{\eeas}{\end{eqnarray*}}
\newcommand{\bea}{\begin{eqnarray}}
\newcommand{\eea}{\end{eqnarray}}
\newcommand{\ben}{\begin{enumerate}}
\newcommand{\een}{\end{enumerate}}
\newcommand{\cF}{\mathcal{F}}
\newcommand{\cC}{\mathcal{C}}
\def \N{\mathbb{N}}
\newcommand{\mQ}{\mathbb{Q}}
\newcommand{\mP}{\mathbb{P}}
\newcommand{\bi}{\begin{itemize}}
\newcommand{\ei}{\end{itemize}}
\newcommand{\beq}{\begin{equation}}
\newcommand{\eeq}{\end{equation}}
\newcommand{\ee}[1]{\ensuremath{\mathbb{E}\left[{#1}\right]}}
\newcommand{\eef}[1]{\ensuremath{\mathbb{E}\left[\left.{#1}\right|\cF_t\right]}}
\newcommand{\angl}[1]{\ensuremath{\langle{#1}\rangle}}
\title{The Zumbach effect under rough Heston}
\author{Omar El Euch, \'Ecole Polytechnique,\\ {\tt  omar.el-euch@polytechnique.edu}\\~\\
Jim Gatheral, Baruch College, CUNY,\\ {\tt jim.gatheral@baruch.cuny.edu}\\~\\
Rado\v{s} Radoi\v{c}i\'c, Baruch College, CUNY,\\ {\tt rados.radoicic@baruch.cuny.edu}\\~\\
Mathieu Rosenbaum,  \'Ecole Polytechnique,\\ {\tt mathieu.rosenbaum@polytechnique.edu}
 }
\begin{document}

\maketitle

\begin{abstract}   

Previous literature has identified an effect, dubbed the Zumbach effect, that is nonzero empirically but conjectured to be zero in any conventional stochastic volatility model. Essentially this effect corresponds
to the property that past squared returns forecast future volatilities better than past volatilities forecast future squared returns. We provide explicit computations of the Zumbach effect under rough Heston and show that they are consistent with empirical estimates. In agreement with previous conjectures however, the Zumbach effect is found to be negligible in the classical Heston model.%
\end{abstract}

  \textbf{Keywords:} Zumbach effect, rough Heston model.

\section{Introduction}\label{intro}

In a series of papers \cite{borland2005dynamics, lynch2003market, zumbach2001heterogeneous, zumbach2004volatility, zumbach2009time}, Gilles Zumbach and co-authors identified several empirical features of financial time series that are not well replicated by conventional stochastic volatility models.  In this paper, we focus on one particular such effect dubbed the {\em Zumbach effect} in \cite{blanc2017quadratic}.


Denote the true integrated variance from the open to the close of day $t$ by $\sigma^2_t$, the open to close return by $r_t$, and let $\angl{\cdot}$ represent a sample average.  Then, for $\tau \in \R$, the statistic (6b) of \cite{chicheportiche2014fine}
\[
\tilde \cC^{(2)}(\tau) = \angl{\left({\sigma_t}^2- \angl{{\sigma_t}^2}\right)\,r^2_{t-\tau}    }  
\]
quantifies (under stationarity assumptions) the covariance of integrated variance with past squared returns.  The particular measure of time-reversal asymmetry (TRA) that is found empirically to be positive in \cite{chicheportiche2014fine} is given by
\beq
\label{eq:ztau}
Z(\tau) := \tilde \cC^{(2)}(\tau) - \tilde \cC^{(2)}(-\tau), \quad \tau > 0.
\eeq
In words, the covariance between historical squared returns and future integrated variance is greater than the covariance between historical integrated variance and future squared returns.  The following quote from  \cite{blanc2017quadratic} refers to this measure $Z(\tau)$ of TRA:
\begin{quote}
Interestingly, all continuous time stochastic volatility models, from the famous CIR-Heston model (Cox et al. 1985, Heston 1993) to the Multifractal Random Walk model alluded to above, obey TRS\footnote{Time-reversal symmetry} by construction and therefore {\it cannot} account for the empirical TRA of financial time series. 
\end{quote}

In the present paper, we first confirm that $Z(\tau)$ is empirically nonzero.  We then compute $Z(\tau)$ explicitly under rough Heston.
We show that when the Hurst parameter $H$ of the volatility process is small ($H$ of order $0.1$) as established empirically in \cite{gatheral2018volatility} and confirmed in \cite{bennedsen2016decoupling}, the Zumbach effect obtained under rough Heston is very consistent with empirical estimates.
However, when $H=1/2$, corresponding to the conventional Heston model, we get that $Z(\tau)$ is indeed numerically absolutely negligible.

Our paper proceeds as follows.  In Section \ref{sec:empirical}, we confirm that the Zumbach effect is empirically nonzero. In Section \ref{sec:roughHeston}, we compute the Zumbach effect under rough Heston.  Finally, in Section \ref{sec:numerical}, we show that the rough Heston model is both qualitatively and quantitatively  consistent with empirical estimates. Some additional detailed computations are relegated to the appendix.

\section{Empirical estimation of the Zumbach effect}\label{sec:empirical}

For our empirical study, we use opening and closing prices and precomputed realized kernel estimates of intraday (open to close) integrated variance from the Oxford-Man Institute of Quantitative Finance Realized Library  from 2000, January 3 to 2018, July 25.\footnote{\url{http://realized.oxford-man.ox.ac.uk/data/download} The Oxford-Man Institute of Quantitative Finance Realized Library contains a selection of daily non-parametric estimates of volatility of financial assets, including realized variance and realized kernel estimates.  A selection of such estimators is described and their performances compared in, for example, \cite{gatheral2010zero}.}.

There are 31 indices in the Oxford dataset, as listed in Appendix \ref{sec:indices}.  We proceed by computing $\tilde C^{(2)}(\tau)$ and $\tilde C^{(2)}(-\tau)$ for each of these indices and converting these to correlations by dividing by the relevant sample variances.  That is, for each index, we compute
\[
\tilde \rho(\tau) = \frac{\tilde C^{(2)}(\tau)}{\sqrt{\angl{(\sigma_t^2-\angl{\sigma_t^2})^2}\,\angl{(r_{t-\tau}^2-\angl{r_{t-\tau}^2})^2}}}.
\]
We then average the $\tilde \rho_j$ across the indices $j$ in the dataset to obtain 
\[
\bar \rho (\tau) = \frac1 {31}\sum_{j=1}^{31}\,\tilde \rho_j(\tau).
\]
Finally, corresponding to Equation (25) of \cite{chicheportiche2014fine}, we further define the integrated difference
\[
\Delta(\tau) = \sum_{i=1}^\tau\,\left(  \bar \rho(i) - \bar \rho(-i)  \right).
\]
In Figure \ref{fig:chicheFig10}, we present respectively $\bar \rho(\tau)$, $\bar \rho(-\tau)$ and $\Delta(\tau)$, reproducing Figure 10 of \cite{chicheportiche2014fine}, and confirming empirically that the Zumbach effect is nonzero.
 \begin{figure} [tbh!]
\centering
\includegraphics[width=\linewidth ]{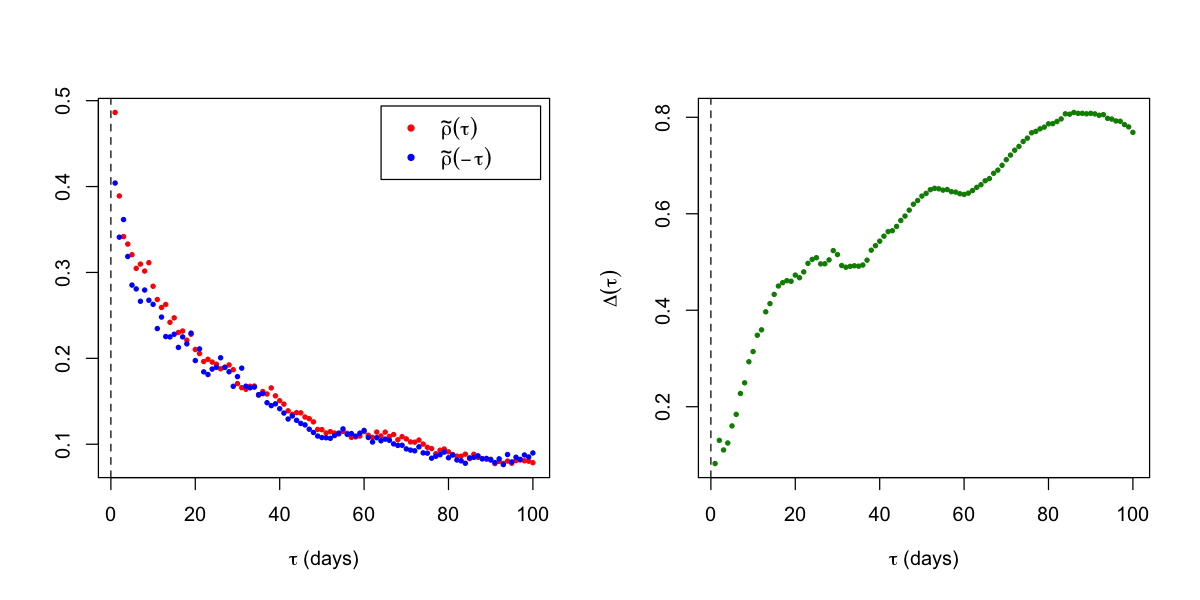}
\caption{In the left panel, the $\bar \rho(\tau)$ are in red, the $\bar \rho(-\tau)$ in blue. In the right panel we plot the integrated difference $\Delta(\tau)$.}
\label{fig:chicheFig10}
\end{figure}

\begin{remark}
Comparing Figure \ref{fig:chicheFig10} with Figure 10 of \cite{chicheportiche2014fine}, we note that our correlations are in general significantly greater.  We attribute this difference to the superior accuracy of the Oxford-Man realized kernel estimates of integrated variance that we use here relative to the Rogers-Satchell estimates computed in \cite{chicheportiche2014fine}.
\end{remark}



\section{The rough Heston model}\label{sec:roughHeston}

\subsection{Description of the model}

We consider the rough Heston model introduced in \cite{el2018characteristic} for the price $S_t$ of an asset at time $t$:
$$ dS_t = S_t \sqrt{V_t} dW_t, $$
\begin{equation}\label{eq:variance} V_t =  g_0(t) -  \frac{1}{\Gamma(H + 1/2)}  \int_{0}^t (t-s)^{H-1/2} \lambda V_s ds +\frac{1}{\Gamma(H + 1/2)}  \int_{0}^t (t-s)^{H-1/2} \nu \sqrt{V_s} dB_s.
\end{equation}
Here $H \in (0, 1/2]$ is the Hurst exponent of the volatility, $\lambda>0$ is the mean reversion parameter, $\nu>0$ is the volatility of volatility parameter and $(W, B)$ is a $\rho$-correlated Brownian motion with $\rho \in [-1, 1]$. The function $g_0$ is assumed to be continuous and is linked to the forward variance curve $\xi_0(t) = \E[V_t]$ as follows:
\beq 
g_0(t) = \xi_0(t) +  \frac{1}{\Gamma(H + 1/2)}  \int_{0}^t (t-s)^{H-1/2}\, \lambda \,\xi_0(s) ds . \label{eq:g0}
\eeq
Note that in \cite{jaber2018markovian}, a general condition on $g_0$ is given to guarantee weak existence and uniqueness for the solution of the equation defining the rough Heston model. 

 In \cite{el2018characteristic,euch2018perfect}, it is shown that there exists a semi-closed form expression for the characteristic function, just as in the classical Heston case, and that explicit hedging strategies can be derived. Fast and accurate option pricing is also possible, see \cite{gatheral2018rational}.  
Furthermore, the rough Heston model displays the rough behavior of the volatility observed empirically in \cite{gatheral2018volatility}.  
  More precisely, the variance process $V$ admits H\"older continuous paths with regularity strictly less than $H$. In addition to the fit to historical data, it is shown in \cite{el2018roughening}  that with suitably calibrated parameters ($H$, $\nu$, $\rho$ and $\lambda$), the rough Heston model typically fits the SPX volatility surface remarkably well.

\subsection{The Zumbach effect under rough Heston: explicit computation}

We provide in this section an explicit formula for the Zumbach effect in the rough Heston model (Theorem \ref{theo_approx}). We start with a discussion about the use of correlations or covariances when computing the Zumbach effect under rough Heston.

\subsubsection{Correlations versus covariances}

From a theoretical viewpoint, approximating theoretical quantities such as covariances and correlations by sample values makes sense only provided the underlying dynamics can be considered stationary. In the context of the rough Heston model \eqref{eq:variance}, this would imply that the parameter $\lambda$ should be large enough with respect to the observation time scale. However, whether rough volatility models are estimated under $\mP$ or $\mQ$, $\lambda$ is typically found to be  small relative to this observation time scale, 
see \cite{el2018roughening}. In this case, under rough Heston, the very notion of the Zumbach effect may appear somehow ill-defined.

%

It turns out, as will be seen in Section \ref{computation_Zumbach}, that under rough Heston, the Zumbach effect $Z(\tau)$  expressed as a difference of covariances \eqref{eq:ztau}, does not depend asymptotically on $\lambda$. This is in contrast to the effect $Z^{\text{Correl}}(\cdot) $ expressed in terms of corrrelations as in Proposition \ref{prop:zumbach_prop}.
%
 Consequently, we choose to focus on covariances and express the Zumbach effect in terms of the covariances $\tilde C^{(2)}$ rather than the correlations $\tilde \rho$. For the sake of completeness, computations based on correlation in the stationary regime are presented in Appendix \ref{sec:appcorrel}. 

\subsubsection{Computation of the Zumbach effect}\label{computation_Zumbach}

Denote the length of the trading day by $\delta$.  Under the rough Heston model, the open to close (log-)return is given by\footnote{For simplicity, we take only the martingale part of the log-returns.}
$$ r_t = \int_{t - \delta}^t \sqrt{V_s} dW_s, $$
and the daily integrated variance by 
$$ \sigma^2_t = \int_{t- \delta}^t V_s ds. $$

\noindent Our aim is to prove that in the rough Heston model, the counterpart of $Z(\tau) = \tilde \cC^{(2)}(\tau) - \tilde \cC^{(2)}(-\tau)$ is positive
for $\tau = k \,\delta$ with $k \in \N_{>0}$. Hence, we write
$$ Z_t(k) =  \text{Cov}[r_{t}^2, \sigma^2_{t+k \delta}] - \text{Cov}[r_{t + k \delta}^2, \sigma^2_{t}], \quad k \in \N_{>0}, \quad t \geq \delta.$$

\noindent  By It\^o's isometry, $ \ee{r_t^2}=\ee{\sigma^2_t}$ for any time $t$ so
$$ 
Z_t(k) = \E[r_t^2 \,\sigma^2_{t + k \delta}] - \E[r_{t + k \delta}^2\, \sigma^2_t]. 
$$
Applying It\^o's formula, we get 
$$ r_t^2 = 2 \int_{t- \delta}^t \sqrt{V_s} \left( \int_{t- \delta}^s \sqrt{V_u} dW_u \right) dW_s  + \sigma^2_t . $$
This together with the fact that $\ee{r_{t + k\delta}^2 \,\sigma^2_{t}} =  \ee{\eef{r_{t + k \delta}^2}\, \sigma^2_{t}} =  \E[\sigma^2_{t + k \delta} \,\sigma^2_t] $ leads to
\begin{equation}\label{inter}
Z_t(k) = 2\, \E\left[ \sigma^2_{t + k \delta} \int_{t- \delta}^t \sqrt{V_s} \left( \int_{t- \delta}^s \sqrt{V_u} dW_u \right) dW_s \right]. 
\end{equation}
\noindent Substituting \eqref{eq:g0} into \eqref{eq:variance} with $\alpha=H+\frac12$ gives
\[
V_t =  \xi_0(t) -  \frac{1}{\Gamma(\alpha)}  \int_{0}^t (t-s)^{\alpha-1} \lambda \,\left(V_s - \xi_0(s)\right) ds +\frac{1}{\Gamma(\alpha)}  \int_{0}^t (t-s)^{\alpha-1} \nu \sqrt{V_s} dB_s.
\]
From Lemma \ref{lem:34proof}, the solution is given by
\begin{equation} \label{eq:variance2} 
V_t = \xi_0(t) + \int_0^t f^{\alpha, \lambda}(t-s) \frac{\nu}{\lambda} \sqrt{V_s} dB_s
\end{equation}  
where 
$f^{\alpha, \lambda}(x) = \lambda x^{\alpha - 1} \sum_{k \geq 0} \frac{(-\lambda x^\alpha)^k}{\Gamma(\alpha(k+1))}$ is the Mittag-Leffler density function. 
It follows that the future integrated variance satisfies
\beas
\sigma^2_{t + k\, \delta} &=&\int_{t+(k-1) \,\delta}^{t + k \,\delta} V_s\,ds\\
&=& \int_{t+(k-1) \,\delta}^{t + k \,\delta} \xi_0(s)\,ds +  \int_{0}^{t + k\, \delta} F^{\alpha, \lambda}(t+k \,\delta -s) \frac{\nu}{\lambda} \sqrt{V_s} dB_s\\
&-&  \int_{0}^{t + (k-1)\, \delta} F^{\alpha, \lambda}(t+(k-1)\, \delta - s) \frac{\nu}{\lambda} \sqrt{V_s} dB_s,
\eeas
where $F^{\alpha, \lambda}(x) = \int_0^x f^{\alpha, \lambda}(s) ds$. 
Consequently \eqref{inter} may be rewritten as
$$ 
Z_t(k) = 2 \,\frac{\rho \nu}{\lambda} \int_{t-\delta}^t \left(F^{\alpha, \lambda}(t + k \delta - s) - F^{\alpha, \lambda}(t + (k-1) \delta - s)\right)\, \E\left[V_s \int_{t- \delta}^s \sqrt{V_u} dW_u \right] ds. 
$$
Again using \eqref{eq:variance2}, we have that for $s \geq t - \delta$,
$$ 
\E\left[V_s \int_{t- \delta}^s \sqrt{V_u} dW_u \right] = \frac{\rho \nu}{\lambda} \int_{t- \delta}^s f^{\alpha, \lambda}(s-u) \,\xi_0(u) \,du.
$$
Thus,
\bea \label{eq:final_form} 
Z_t(k)= 2 \,\frac{(\rho\, \nu)^2}{\lambda^2} \int_0^{\delta}\, \left(F^{\alpha, \lambda}(s + k \delta) - F^{\alpha, \lambda}(s + (k-1) \delta )\right)\, \int_0^{\delta-s} \,f^{\alpha, \lambda}(u) \xi_0(t-s-u) \,du\, ds,\nonumber\\ 
\eea
which is positive if $\rho$ is different from zero. 
Using the fact that 
$$ 
F^{\alpha, \lambda}(x) \underset{x \rightarrow 0}{\sim} \frac{\lambda \,x^{\alpha}}{\Gamma(\alpha + 1)}, 
$$ together with the dominated convergence theorem,
we derive the following result.
\begin{theorem} \label{theo_approx} Assume that $\rho$ is nonzero and that the forward variance curve is continuous. Then $Z_t(k) > 0$ and as $\delta$ goes to zero,
$$ Z_t(k)  \underset{\delta \rightarrow 0}{\sim} 2 (\rho \,\nu)^2 \,\delta^{2 \alpha + 1}\, g_{\alpha}(k)\, \xi_0(t), \quad k \in \N_{>0}, \quad t >0, $$
with $ g_{\alpha}(k) = \frac{1}{\Gamma(\alpha + 1)^2} \int_0^1 \left( (k+s)^\alpha - (k+s-1)^\alpha \right) (1-s)^\alpha ds.$
\end{theorem}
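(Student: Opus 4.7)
My plan is to extract the leading-order $\delta$-behaviour directly from the closed-form expression \eqref{eq:final_form}, combining the given asymptotic of $F^{\alpha,\lambda}$ at zero with dominated convergence. Positivity follows from the structure of \eqref{eq:final_form}: the Mittag-Leffler function $f^{\alpha,\lambda}$ is a probability density and hence nonnegative, so $F^{\alpha,\lambda}$ is strictly increasing and the increment $F^{\alpha,\lambda}(s+k\delta)-F^{\alpha,\lambda}(s+(k-1)\delta)$ is strictly positive for every $s$; together with $\xi_0 \geq 0$, strict positivity of $\xi_0$ at $t$, and $\rho\neq 0$, this yields $Z_t(k)>0$.

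For the asymptotic, continuity of $\xi_0$ at $t$ allows replacement of $\xi_0(t-s-u)$ by $\xi_0(t)$ in the inner integral of \eqref{eq:final_form}, reducing it to $\xi_0(t)\,F^{\alpha,\lambda}(\delta-s)$ plus a correction that is $o(1)$ uniformly on the shrinking integration region. Substituting $s=\delta s'$ with Jacobian $\delta$ and applying the given asymptotic $F^{\alpha,\lambda}(x)\sim \lambda x^{\alpha}/\Gamma(\alpha+1)$ to the three $F$-values (whose arguments are all $O(\delta)$) produces
$$
Z_t(k) \sim \frac{2(\rho\nu)^2}{\lambda^2}\,\xi_0(t)\,\delta \cdot \frac{\lambda^2\, \delta^{2\alpha}}{\Gamma(\alpha+1)^2} \int_0^1 \bigl[(s'+k)^{\alpha}-(s'+k-1)^{\alpha}\bigr]\,(1-s')^{\alpha}\,ds',
$$
in which the factors $\lambda^{\pm 2}$ cancel and the $\delta$-powers combine to $\delta^{2\alpha+1}$, giving the announced constant times $g_{\alpha}(k)\,\xi_0(t)$.

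The main obstacle is justifying the interchange of limit and integral. For this I would use the bound $F^{\alpha,\lambda}(x)\leq C_K\,x^{\alpha}$ on any compact interval $[0,K]$, available termwise from the Mittag-Leffler series: after rescaling this bounds the outer $F$-increment by a constant multiple of $(s'+k)^{\alpha}-(s'+k-1)^{\alpha}$, uniformly bounded in $s'\in[0,1]$ since $k\geq 1$, and bounds the inner $F^{\alpha,\lambda}(\delta-s)$-factor by $C(1-s')^{\alpha}$, which is integrable on $[0,1]$ for $\alpha>0$. Continuity of $\xi_0$ on the compact window $[t-\delta_0,t]$ for fixed small $\delta_0$ supplies both a uniform majorant and the uniform modulus of continuity needed to control the error from freezing $\xi_0$ at $t$. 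With these bounds in hand, dominated convergence applies and the stated equivalent follows.
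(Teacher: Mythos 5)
Your proposal is correct and follows essentially the same route as the paper: positivity read off from the sign structure of \eqref{eq:final_form}, and the asymptotic obtained by rescaling, freezing $\xi_0$ at $t$ via continuity, applying $F^{\alpha,\lambda}(x)\sim \lambda x^{\alpha}/\Gamma(\alpha+1)$, and justifying the limit with dominated convergence. You simply supply the domination and error-control details that the paper leaves implicit.
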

We see that this measure of the Zumbach effect is indeed independent of $\lambda$. It is also independent of $t$ in the flat forward variance curve case.

\section{Numerical results}\label{sec:numerical}

To compare model computations with empirical estimates, we adopt the following model parameters typical of calibrations to the SPX implied volatility surface:
$$ 
 \quad \rho = - 0.7, \quad \nu = 0.45, \quad H = 0.05, \quad \lambda = 0.3. 
 $$
We assume a flat forward variance curve setting $\xi_0(t)=0.025$, the approximate sample mean of ${\sigma_t}^2$.

 \begin{figure} [tbh!]
\centering
\includegraphics[width= 0.8\linewidth ]{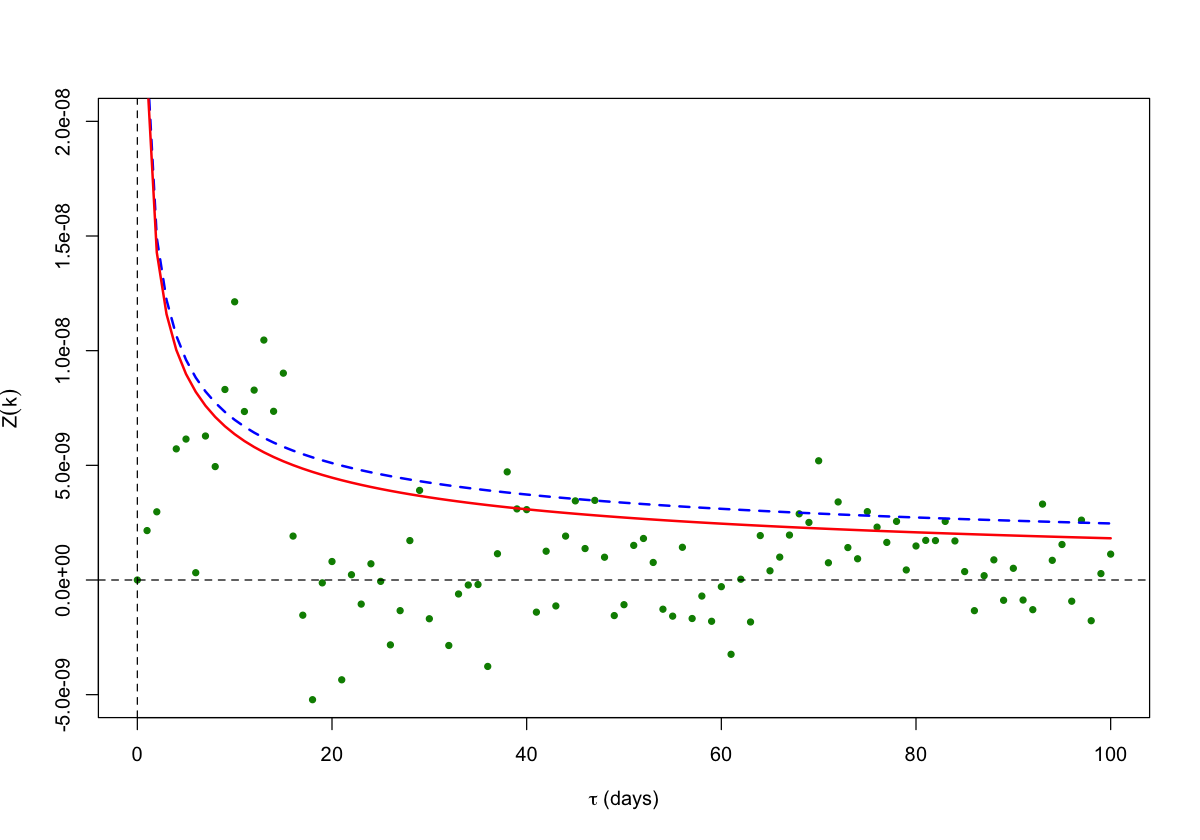}
\caption{With $\tau = k\,\delta$, the green points are the empirical estimates of $Z(\tau)$, the solid red line is the model computation \eqref{eq:final_form} of $Z_t(k)$, the dashed blue line is the small $\delta$ approximation from Theorem \ref{theo_approx} to $Z_t(k)$.}
\label{fig:ZumbachModel}
\end{figure}

 In Figure \ref{fig:ZumbachModel}, we superimpose empirical estimates $Z(\tau) = Z(k \delta)$ and model computations $Z_t(k)$ for SPX (which do not depend on $t$ here).  Although model computations are somewhat higher than empirical estimates, we argue that this nevertheless represents good agreement between model and data.  One factor no doubt contributing to the discrepancy is that we expect volatility of volatility and correlation under $\mQ$ to be more extreme than their equivalents under $\mP$.

\subsection{Dependence on $H$}

\begin{figure} [tbh!]
\centering
\includegraphics[width=0.8\linewidth ]{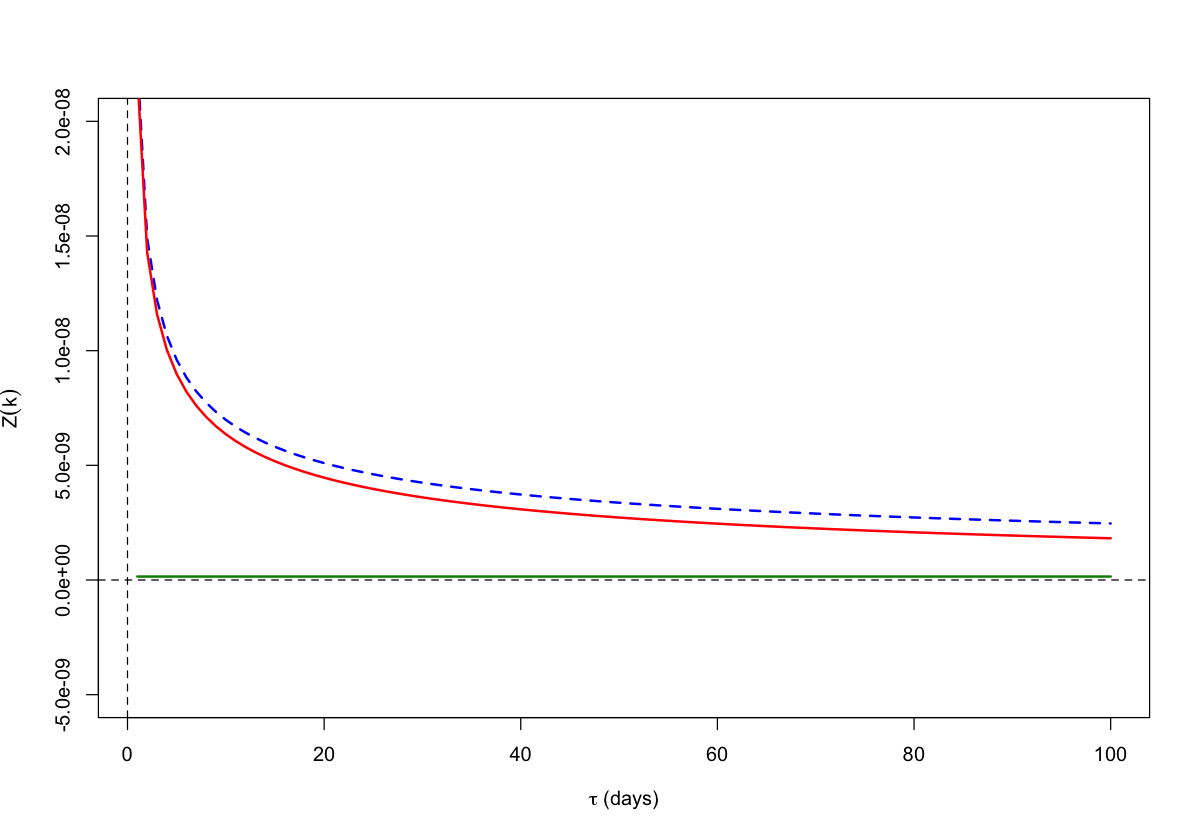}
\caption{The solid red line is $Z_t(k)$ computed with $H=0.05$, the blue dashed line is the approximation from Theorem \ref{theo_approx}, and the green line close to the $x$-axis is $Z_t(k)$ with $H=1/2$.  We see that the effect is negligible when $H=1/2$.}\label{fig:H05H5}
\end{figure}

We now examine the dependence of $Z_t(k)$ on the Hurst exponent $H$.  We already showed that under rough Heston with reasonable parameters, the Zumbach effect is consistent with empirical estimates.  In contrast, when $H=1/2$, we see from Figure \ref{fig:H05H5} that the Zumbach effect is negligible.  Indeed, from Theorem \ref{theo_approx}, $Z_t(k)$ is of order $\delta^{2\alpha+1} = \delta^{2H}$ for small $\delta$. When $H=1/2$, $Z_t(k) \sim \delta $ becomes very small, whereas as $H \rightarrow 0$, that is when volatility is rough, the Zumbach effect remains significant.

 \section*{Acknowledgements}
We thank Jean-Philippe Bouchaud for drawing the Zumbach effect to our attention, and for many animated and scintillating subsequent discussions.

\appendix

\section{List of indices in the Oxford-Man Institute of Quantitative Finance Realized Library}\label{sec:indices}

The following table lists all of the index tickers included in the Oxford-Man Institute of Quantitative Finance Realized Library together with index descriptions.

\begin{centering}
\begin{tabular}{ll}
  \hline
Index ticker & Index description\\ 
  \hline
.AEX & Amsterdam Exchange Index\\ 
  .AORD & All Ordinaries Index\\ 
  .BFX & BEL 20 Index \\ 
  .BSESN & S\&P Bombay Stock Exchange SENSEX Index\\ 
  .BVLG & Euronext PSI General Index \\ 
  .BVSP & BOVESPA Index \\ 
  .DJI & Dow Jones Industrial Average \\ 
  .FCHI & CAC 40 \\ 
  .FTMIB & FTSE MIB Index \\ 
  .FTSE & FTSE 100 Index \\ 
  .GDAXI & DAX  Index\\ 
  .GSPTSE & S\&P/TSX Composite Index \\ 
  .HSI & Hang Seng Index\\ 
  .IBEX & IBEX 35 Index\\ 
  .IXIC & Nasdaq Composite Index \\ 
  .KS11 & KOSDAQ Composite Index \\ 
  .KSE & Karachi Stock Exchange 100 Index \\ 
  .MXX & Mexican Bolsa IPC Index \\ 
  .N225 & Nikkei 225 Index\\ 
  .NSEI & NIFTY 50 Index\\ 
  .OMXC20 & OMX Copenhagen 20 Index  \\ 
  .OMXHPI & OMX Helsinki All-Share Index\\ 
  .OMXSPI & OMX Stockholm All-Share Index \\ 
  .OSEAX & Oslo B{\o}rs All-Share Index  \\ 
  .RUT & Russell 2000 Index \\ 
  .SMSI & Madrid General Index \\ 
  .SPX & 	S\&P 500 Index\\ 
  .SSEC & Shanghai Composite Index \\ 
  .SSMI & Swiss Market Index\\ 
  .STI & Straits Times Index \\ 
  .STOXX50E & Euro STOXX 50 Index \\ 
   \hline
\end{tabular}
\end{centering}

 \section{Proof of \eqref{eq:variance2}}
The following technical lemma slightly extends Proposition 4.10 in \cite{el2018microstructural}.
\begin{lemma}\label{lem:34proof}
The process $V$ is solution of the following rough stochastic differential
equation
\[
V_t =  \xi_0(t) -  \frac{1}{\Gamma(\alpha)}  \int_{0}^t (t-s)^{\alpha-1} \lambda \,\left(V_s - \xi_0(s)\right) ds +\frac{1}{\Gamma(\alpha)}  \int_{0}^t (t-s)^{\alpha-1} \nu \sqrt{V_s} dB_s
\]
if and only if it is solution of
\[
V_t = \xi_0(t) + \int_0^t f^{\alpha, \lambda}(t-s) \frac{\nu}{\lambda} \sqrt{V_s} dB_s.
\]
\end{lemma}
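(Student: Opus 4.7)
My plan is to prove the equivalence by recognizing the first equation as a linear stochastic Volterra equation in the auxiliary process $Y_t := V_t - \xi_0(t)$, and by identifying the kernel $f^{\alpha,\lambda}/\lambda$ as the resolvent of $\lambda K_\alpha$, where $K_\alpha(x) := x^{\alpha-1}/\Gamma(\alpha)$ denotes the fractional kernel. Setting $Y_t = V_t-\xi_0(t)$, the first equation reads
\[
Y_t = -\lambda \int_0^t K_\alpha(t-s)\,Y_s\,ds + \nu \int_0^t K_\alpha(t-s)\,\sqrt{V_s}\,dB_s,
\]
while the second equation reads $Y_t = (\nu/\lambda)\int_0^t f^{\alpha,\lambda}(t-s)\sqrt{V_s}\,dB_s$.

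The key algebraic fact I would establish first is the resolvent identity
\[
E^\lambda_\alpha(x) + \lambda \,(K_\alpha \ast E^\lambda_\alpha)(x) = K_\alpha(x), \qquad E^\lambda_\alpha := f^{\alpha,\lambda}/\lambda,
\]
which I would verify by writing $E^\lambda_\alpha(x) = \sum_{k\ge 0}(-\lambda)^k K_{\alpha(k+1)}(x)$, using the semigroup property $K_a \ast K_b = K_{a+b}$ of the fractional kernels (which follows from the Beta integral), and noting that the resulting series telescopes. Absolute convergence on compact time intervals is standard for the Mittag-Leffler series.

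For the implication (first equation) $\Rightarrow$ (second equation), I would convolve the Volterra equation for $Y$ against $E^\lambda_\alpha$, interchange the deterministic and stochastic integrals via the stochastic Fubini theorem, and invoke the resolvent identity to cancel the $K_\alpha \ast$ term, obtaining $Y_t = (\nu/\lambda)\int_0^t f^{\alpha,\lambda}(t-s)\sqrt{V_s}\,dB_s$. For the converse direction, I would start from the second equation, convolve against $\lambda K_\alpha$, and again use stochastic Fubini together with the resolvent identity in the symmetric form $\lambda(K_\alpha \ast E^\lambda_\alpha) = K_\alpha - E^\lambda_\alpha$ to recover the first equation.

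The main obstacle, and the reason the lemma is phrased as a slight extension of Proposition 4.10 of \cite{el2018microstructural}, is the rigorous application of stochastic Fubini when the kernel $K_\alpha$ is singular at zero (since $\alpha = H + 1/2 \in (1/2,1]$, the singularity is integrable but care is needed) and when the Mittag-Leffler density $f^{\alpha,\lambda}$ is itself a convergent series of such singular kernels. Concretely, I would check the $L^2$ integrability condition $\int_0^T\int_0^t |E^\lambda_\alpha(t-s)|^2\,\mathbb{E}[V_s]\,ds\,dt < \infty$ (and similarly for $K_\alpha$) to legitimately apply the stochastic Fubini theorem, exploiting the power-type decay of the summands in the Mittag-Leffler expansion together with the finiteness of $\mathbb{E}[V_s]$ inherited from the weak existence result of \cite{jaber2018markovian}. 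Once stochastic Fubini is justified, the remainder of the argument is the purely algebraic resolvent manipulation outlined above.
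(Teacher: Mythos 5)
Your proof is correct, but it takes a genuinely different route from the paper's. You treat the first equation as a linear stochastic Volterra equation for $Y=V-\xi_0$ and invert it by convolving with the resolvent of the second kind of the fractional kernel, using the identity $E^\lambda_\alpha+\lambda\,(K_\alpha\ast E^\lambda_\alpha)=K_\alpha$ with $E^\lambda_\alpha=f^{\alpha,\lambda}/\lambda$ (your series verification via $K_a\ast K_b=K_{a+b}$ is right, though the cancellation is an index shift rather than a telescope in the strict sense). The paper instead applies the fractional integration operator $I^{1-\alpha}$ to both sides, uses the specific closed-form property $I^{1-\alpha}f^{\alpha,\lambda}=\lambda\,(1-F^{\alpha,\lambda})$ of the Mittag-Leffler density together with the stochastic Fubini theorem to reduce everything to an ordinary order-one integral equation, and then recovers the original equation by fractional differentiation. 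Your resolvent argument is the standard device of the affine Volterra literature: it generalizes verbatim to any kernel admitting a resolvent and keeps both implications symmetric (convolve with $E^\lambda_\alpha$ one way, with $\lambda K_\alpha$ the other), at the cost of having to justify stochastic Fubini against the singular resolvent kernel, which you correctly flag and handle via local square-integrability of $K_\alpha$ for $\alpha>1/2$ and finiteness of $\E[V_s]$. The paper's route buys a chain of manifest equivalences and exploits the explicit Mittag-Leffler identities, but it requires making sense of $I^{1-\alpha}$ and its inverse applied to a stochastic convolution, which is arguably the more delicate analytic step. Both proofs rest on the same two ingredients (a convolution identity for $f^{\alpha,\lambda}$ and stochastic Fubini), so the difference is one of packaging rather than substance; your version is complete as outlined.
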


\begin{proof}
Suppose
\[
V_t = \xi_0(t) + \int_0^t f^{\alpha, \lambda}(t-s) \frac{\nu}{\lambda} \sqrt{V_s} dB_s.
\]
Then\footnote{For definitions and properties of fractional integration and differentiation and of the Mittag-Leffler density $f^{\alpha,\lambda}$, see for example Appendices A3 and A4 of \cite{el2018microstructural}.} using fractional integration of order $1-\alpha$ (denoted by $I^{1-\alpha}$), the properties of the Mittag-Leffler density and the stochastic Fubini theorem, this is equivalent to
\beas
I^{1-\alpha} V_t &=& I^{1-\alpha}\xi_0(t) + \frac{\nu}{\lambda}\,\int_0^t\,I^{1-\alpha} f^{\alpha, \lambda}(t-s)  \sqrt{V_s} dB_s\\
&=& I^{1-\alpha}\xi_0(t) + \frac{\nu}{\lambda}\,\int_0^t \,\lambda\,\left(1-F^{\alpha,\lambda}(t-s)\right)\,  \sqrt{V_s} dB_s\\
&=& I^{1-\alpha}\xi_0(t) + \nu\,\int_0^t \, \sqrt{V_s} dB_s
-\nu\,\int_0^t \,dB_s\,\int_s^{t}\,f^{\alpha,\lambda}(u-s)\, \sqrt{V_s} du\\
&=& I^{1-\alpha}\xi_0(t) + \nu\,\int_0^t \, \sqrt{V_s} dB_s
-\nu\,\int_0^t \,du\,\int_0^{u}\,f^{\alpha,\lambda}(u-s)\, \sqrt{V_s} \,dB_s\\
&=& I^{1-\alpha}\xi_0(t) + \nu\,\int_0^t \, \sqrt{V_s} dB_s
-\lambda\,\int_0^t \,\left(V_u - \xi_0(u)\right)\,du.
\eeas
Finally, applying fractional differentiation of order $1-\alpha$ together with the stochastic Fubini theorem we deduce the result.
\end{proof}

\section{The Zumbach effect in terms of correlations in the stationary regime}\label{sec:appcorrel}
We now discuss the Zumbach effect in terms of correlations in the stationary regime, that is when $t$ goes to infinity. In particular, we suppose that $\xi_0(t)$ satisfies
\begin{equation} \label{assum_stat}
\xi_0(t) \longrightarrow  \xi_0(\infty),
\end{equation} 
as $t$ goes infinity for some $\xi_0(\infty)>0$. 
 From Theorem \ref{theo_approx}, we have that for small $\delta$, $Z_t(k)$ is equivalent to
$$ 2 (\rho \nu)^2 \delta^{2 \alpha + 1} g_{\alpha}(k) \xi_0(\infty).$$

Moreover, from Appendix \ref{computation}, the limit of $ \E[r_t^4] $ as $t$ goes to infinity is 
 \begin{align*}
  & \xi_0(\infty) \frac{12\rho^2 \nu^2}{\lambda^2} \int_{0}^{\delta} F^{\alpha, \lambda}(s)F^{\alpha, \lambda}(\delta-u) du  + 3 \xi_0(\infty)^2 \delta^2   +  \frac{3\nu^2}{\lambda^2}  \xi_0(\infty) \int_0^{\delta}F^{\alpha, \lambda}(u)^2  du \\
  &+ \frac{6 \nu^2}{\lambda^2} \xi_0(\infty) \int_0^{\infty} \left(\int_0^{\delta}\left(F^{\alpha,\lambda}(s+u) - F^{\alpha,\lambda}(u) \right) f^{\alpha,\lambda}(s+u)  ds \right) du.
 \end{align*}
 This limit is equivalent for small $\delta$ to 
 $$  3 \xi_0(\infty)^2 \delta^2 + \frac{3 \nu^2}{\lambda^2} \xi_0(\infty) \delta ^2\int_0^{\infty}  f^{\alpha,\lambda}(s)^2  ds .$$
 
 \noindent In the same way, from Appendix \ref{computation}, the limit of $ \text{Var}[\sigma^2_t] $ as $t$ goes to infinity is
 $$  \frac{\nu^2}{\lambda^2}\xi_0(\infty) \int_0^{\infty} \left( F^{\alpha, \lambda}(s+\delta) - F^{\alpha, \lambda}(s) \right)^2  ds +\frac{\nu^2}{\lambda^2}  \xi_0(\infty) \int_0^{\delta} F^{\alpha, \lambda}(s)^2  ds, $$
 which is equivalent to
 $$ \frac{\nu^2}{\lambda^2}\xi_0(\infty) \delta^2 \int_0^{\infty} f^{\alpha, \lambda}(s)^2 ds.  $$

 \noindent Let us now define the correlation based Zumbach effect $Z^{\text{Correl}}(k)$ by
$$ Z^{\text{Correl}}(k)  =  \frac{\underset{t \rightarrow\infty}{\lim}  Z_t(k)}{\sqrt{\underset{t \rightarrow\infty}{\lim} \text{Var}[\sigma^2_t]\text{Var}[r_t]}}.$$
 From previous computations, we deduce the following proposition.
 \begin{prop} \label{prop:zumbach_prop} We have
 $$ Z^{\emph{Correl}}(k)   \underset{\delta \rightarrow 0}{\sim}\frac{2 (\rho \nu)^2 \sqrt{\xi_0(\infty)}}{\sqrt{\frac{\nu^2}{\lambda^2} \int_0^{\infty} f^{\alpha, \lambda}(s)^2 ds}\sqrt{2 \xi_0(\infty)+ \frac{3 \nu^2}{\lambda^2} \int_0^{\infty}  f^{\alpha,\lambda}(s)^2  ds}}\delta^{2 \alpha - 1} g_{\alpha}(k).$$
 \end{prop}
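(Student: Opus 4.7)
The proposition is essentially an arithmetic consequence of three asymptotic equivalents already established in the paragraphs preceding it. The plan is to plug each of them into the definition of $Z^{\text{Correl}}(k)$ and read off the constants and the power of $\delta$; no new stochastic analysis is required once the preceding computations (from Theorem \ref{theo_approx} and Appendix \ref{sec:appcorrel}) are in hand.

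For the numerator, the first step is to apply Theorem \ref{theo_approx} in the stationary regime. Under the hypothesis \eqref{assum_stat}, $\xi_0(t) \to \xi_0(\infty)$, and Theorem \ref{theo_approx} (whose proof passes through uniformly in $t$ once the forward curve is bounded) gives
$$
\lim_{t \to \infty} Z_t(k) \ \underset{\delta \to 0}{\sim} \ 2(\rho \nu)^2\,\xi_0(\infty)\, g_\alpha(k)\, \delta^{2\alpha+1}.
$$
For the denominator, I would take the small-$\delta$ equivalent of $\lim_t \var[\sigma_t^2]$ reproduced from Appendix \ref{sec:appcorrel}, namely $\frac{\nu^2}{\lambda^2}\xi_0(\infty)\,\delta^2 \int_0^\infty f^{\alpha,\lambda}(s)^2\,ds$, and the corresponding equivalent for $\lim_t \var[r_t^2]$ obtained by subtracting $(\lim_t \E[r_t^2])^2 = \delta^2 \xi_0(\infty)^2 + o(\delta^2)$ from the displayed equivalent of $\lim_t \E[r_t^4]$; this yields $\delta^2\,\xi_0(\infty)\bigl[2\xi_0(\infty) + \frac{3\nu^2}{\lambda^2}\int_0^\infty f^{\alpha,\lambda}(s)^2\,ds\bigr]$.

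The final step is assembly: multiply the two variance equivalents, take the square root, and divide into the numerator. The $\xi_0(\infty)$ factors combine, and the exponent of $\delta$ collapses to $2\alpha+1-2 = 2\alpha-1$, giving exactly the constants displayed in the statement.

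The main (and only delicate) point is the justification of the two small-$\delta$ reductions already invoked in Appendix \ref{sec:appcorrel}: namely passing from $\int_0^\infty (F^{\alpha,\lambda}(s+\delta) - F^{\alpha,\lambda}(s))^2\,ds$ to $\delta^2 \int_0^\infty f^{\alpha,\lambda}(s)^2\,ds$, and the analogous reduction inside the double integral appearing in $\lim_t\E[r_t^4]$. Both rely on pointwise expansion $F^{\alpha,\lambda}(s+\delta) - F^{\alpha,\lambda}(s) = \delta\, f^{\alpha,\lambda}(s) + o(\delta)$ together with a dominated convergence argument, which requires $f^{\alpha,\lambda} \in L^2(\R_+)$. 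This integrability holds precisely because $\alpha = H + \tfrac{1}{2} \in (\tfrac{1}{2}, 1]$: near zero, $f^{\alpha,\lambda}(x) \sim \lambda x^{\alpha-1}/\Gamma(\alpha)$ so $(f^{\alpha,\lambda})^2$ is integrable at the origin, and exponential-type decay of the Mittag-Leffler density takes care of infinity. Producing a uniform integrable majorant for $(F^{\alpha,\lambda}(s+\delta)-F^{\alpha,\lambda}(s))/\delta$ as $\delta \to 0$ is the only technical point; once this is secured, the rest of the proof is mechanical.
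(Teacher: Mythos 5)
Your plan is exactly the paper's proof: the paper's entire argument for Proposition \ref{prop:zumbach_prop} is ``from previous computations, we deduce the following proposition,'' i.e.\ precisely the assembly you describe (stationary limit of Theorem \ref{theo_approx} in the numerator, the two small-$\delta$ variance equivalents from Appendix \ref{computation} in the denominator). Your added remarks on why the small-$\delta$ reductions are legitimate ($f^{\alpha,\lambda}\in L^2(\R_+)$ because $\alpha>1/2$, plus a dominated-convergence majorant for the difference quotients of $F^{\alpha,\lambda}$) address the only point the paper leaves implicit, and are correct.

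One caveat: your final sentence claims the assembly gives ``exactly the constants displayed in the statement,'' but it does not quite. Carrying out your own arithmetic, $\lim_t \var[r_t^2]\sim \delta^2\,\xi_0(\infty)\bigl(2\xi_0(\infty)+\tfrac{3\nu^2}{\lambda^2}\int_0^\infty f^{\alpha,\lambda}(s)^2\,ds\bigr)$ and $\lim_t \var[\sigma_t^2]\sim \delta^2\,\xi_0(\infty)\,\tfrac{\nu^2}{\lambda^2}\int_0^\infty f^{\alpha,\lambda}(s)^2\,ds$, so the square root of their product carries a full factor $\xi_0(\infty)$, which cancels the $\xi_0(\infty)$ in the numerator entirely; no $\sqrt{\xi_0(\infty)}$ should survive. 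The $\sqrt{\xi_0(\infty)}$ appearing in the displayed constant of Proposition \ref{prop:zumbach_prop} is consistent with having dropped one factor of $\xi_0(\infty)$ from $\var[r_t^2]$, and appears to be a slip in the statement rather than in your derivation --- but you should not assert that your computation reproduces the printed constant when it in fact differs from it by $\sqrt{\xi_0(\infty)}$.
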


\section{Variance computations}\label{computation}
We compute in this section $ \text{Var}[\sigma_t^2]$ and $\text{Var}[r_t^2]$.  Using \eqref{eq:variance2} together with the stochastic Fubini theorem, we have that
\begin{equation} \label{expr_v}
\sigma^2_t = \int_{t- \delta}^t \xi_0(s) ds + \int_0^t \frac{\nu}{\lambda} F^{\alpha, \lambda}(t-s) \sqrt{V_s} dB_s -  \int_0^{t-\delta} \frac{\nu}{\lambda} F^{\alpha, \lambda}(t-\delta-s) \sqrt{V_s} dB_s.
\end{equation}
Hence,
$$ \text{Var}[\sigma^2_t] = \frac{\nu^2}{\lambda^2} \int_0^{t-\delta} \left( F^{\alpha, \lambda}(s+\delta) - F^{\alpha, \lambda}(s) \right)^2 \xi_0(t-\delta-s) ds +\frac{\nu^2}{\lambda^2}  \int_0^{\delta} F^{\alpha, \lambda}(s)^2 \xi_0(t-s) ds. $$
In order to compute $\text{Var}[r_t^2]$, we need to get $\E[r_t^4]$. Note that by It\^o's formula, 
$$\E[r_t^4] = 6 \int_{t-\delta}^t  \E\left[\left(\int_{t-\delta}^s \sqrt{V_u} dW_u\right)^2 V_s\right] ds.$$
 Using again It\^o's formula, we get that $ \E\left[\left(\int_{t-\delta}^s \sqrt{V_u} dW_u\right)^2 V_s\right] $ is equal to  
\begin{equation} \label{expr_aux}
2  \E\left[V_s \int_{t-\delta}^s \sqrt{V_u} \int_{t-\delta}^u \sqrt{V_w} dW_wdW_u\right] +  \E\left[V_s \int_{t-\delta}^s V_u du\right].
\end{equation}
From \eqref{eq:variance2}, the first term  in \eqref{expr_aux} is given by 
 $$ \frac{2\rho \nu}{\lambda} \int_{t-\delta}^s f^{\alpha, \lambda}(s-u) \E\left[ V_u \int_{t-\delta}^u \sqrt{V_w} dW_w\right] du,$$
 which is equal to 
 $$ \frac{2\rho^2 \nu^2}{\lambda^2} \int_{t-\delta}^s f^{\alpha, \lambda}(s-u) \left(\int_{t-\delta}^u f^{\alpha, \lambda}(u-w) \xi_0(w) dw \right) du.$$
Hence the first term in \eqref{expr_aux} is equal to
 $$ \frac{2\rho^2 \nu^2}{\lambda^2} \int_0^{s- t+\delta} f^{\alpha, \lambda}(u) \left(\int_0^{s-u-t+\delta} f^{\alpha, \lambda}(w) \xi_0(s-u-w) dw \right) du .$$ 
 Moreover, similarly to \eqref{expr_v}, 
 $$\int_{t-\delta}^s V_u du  = \int_{t- \delta}^s \xi_0(u) du + \int_0^s \frac{\nu}{\lambda} F^{\alpha, \lambda}(s-u) \sqrt{V_u} dB_u -  \int_0^{t-\delta} \frac{\nu}{\lambda} F^{\alpha, \lambda}(t-\delta-u) \sqrt{V_u} dB_u . $$ 
 Therefore the second term of \eqref{expr_aux} is given by
\begin{align*} &\xi_0(s)  \int_{0}^{s-t+\delta} \xi_0(u+t-\delta) du  + \frac{\nu^2}{\lambda^2} \int_0^{s-t+\delta} f^{\alpha, \lambda}(u) F^{\alpha, \lambda}(u) \xi_0(s-u) du\\
& + \frac{\nu^2}{\lambda^2} \int_0^{t-\delta} \left(F^{\alpha,\lambda}(s-t+\delta+u) - F^{\alpha,\lambda}(u) \right) f^{\alpha,\lambda}(s-t+\delta+u) \xi_0(t-\delta-u) du.
 \end{align*}
 Consequently, $\mathbb{E}[r_t^4]$ is equal to
 \begin{align*}
  &\frac{12\rho^2 \nu^2}{\lambda^2} \int_{0}^{\delta} \int_0^{s} f^{\alpha, \lambda}(u) \left(\int_0^{s-u} f^{\alpha, \lambda}(w) \xi_0(s-u+t-\delta-w) dw \right) du ds \\
  &+ 6 \int_0^{\delta} \xi_0(s+t-\delta)  \int_{0}^{s} \xi_0(u+t-\delta) du ds +  \frac{6\nu^2}{\lambda^2} \int_0^{\delta} \int_0^{s} f^{\alpha, \lambda}(u) F^{\alpha, \lambda}(u) \xi_0(s+t-\delta-u) duds \\
  &+ \frac{6 \nu^2}{\lambda^2} \int_0^{t-\delta} \left(\int_0^{\delta}\left(F^{\alpha,\lambda}(s+u) - F^{\alpha,\lambda}(u) \right) f^{\alpha,\lambda}(s+u)  ds \right) \xi_0(t-\delta-u) du.
 \end{align*}

\bibliographystyle{alpha}
\bibliography{Zumbach}
\end{document}